\documentclass[10pt,conference]{IEEEtran}

\usepackage[utf8]{inputenc}
\usepackage{amsmath}
\usepackage{amssymb}
\usepackage{amsthm}
\usepackage{pgfplots}

\newcommand{\nActive}{k}
\newcommand{\nInactive}{N}
\newcommand{\potActive}[1]{N_{#1}}
\newcommand{\setPotActive}[1]{\mathcal{N}_{#1}}
\newcommand{\setPotActiveChosen}[1]{\mathcal{U}_{#1}}

\newcommand{\potActiveNotChosen}[1]{B_{#1}}
\newcommand{\nSlots}{\ell}
\newcommand{\indexSlots}{i}
\newcommand{\shareInactive}{C}
\newcommand{\errorprob}{\varepsilon}
\newcommand{\chooseProbability}{p}
\newcommand{\chooseProbabilityInv}{q}
\newcommand{\Probability}{\text{P}}
\newcommand{\Expectation}{\text{E}}
\newcommand{\chooseActiveIndicator}[1]{A_{#1}}
\newcommand{\shortTermVarOne}{n}
\newcommand{\shortTermVarTwo}{x}
\newcommand{\shortTermRV}{X}
\newcommand{\shortTermVarSubgauss}{\alpha}
\newcommand{\channelIn}[1]{X_{#1}}
\newcommand{\channelOut}{Y}
\newcommand{\channelOutIntermediate}{\tilde{Y}}
\newcommand{\channelNoise}{Z}
\newcommand{\channelPowerLim}{P}
\newcommand{\channelrep}{m}
\newcommand{\channelErrorprob}{\delta}
\newcommand{\channelInstant}{t}
\newcommand{\subgaussnormMax}{K}
\newcommand{\hoeffdingTypeConst}{c}
\newcommand{\channelBlockLength}{n}
\newcommand{\channelDecoder}{D}
\newcommand{\channelEncoder}[1]{E_{#1}}
\newcommand{\channelInputAlphabet}[1]{\mathcal{X}_{#1}}
\newcommand{\channelOutputAlphabet}{\mathcal{Y}}
\newcommand{\nChannelEndpoints}{R}
\newcommand{\indexChannelEndpoints}{r}
\newcommand{\channel}{\mathcal{C}}
\newcommand{\channelMessage}[1]{x_{#1}}
\newcommand{\channelMessageDecoded}{y}
\newcommand{\channelKernel}[1]{W_{#1}}
\newcommand{\disjunctionFunction}{\text{dis}}


\newcommand{\eulerNum}{e}

\newcommand{\absolute}[1]{\left|#1\right|}
\newcommand{\cardinality}[1]{\left|#1\right|}
\newcommand{\naturals}{\mathbb{N}}
\newcommand{\reals}{\mathbb{R}}
\newcommand{\subgaussnorm}[1]{\left\|#1\right\|_{\Psi_2}}

\newcommand{\Markov}{Markov}

\newtheorem{theorem}{Theorem}

\newtheorem{lemma}{Lemma}
\newtheorem{cor}{Corollary}
\newtheorem{Prob}{Problem}
\newtheorem{Assume}{Assumption}
\newtheorem{remark}{Remark}
\newtheorem{definition}{Definition}

\title{User Activity Detection via Group Testing and Coded Computation}
\author{Matthias Frey, Igor Bjelaković and Sławomir Stańczak\\Technische Universität Berlin}

\begin{document}
\maketitle
\begin{abstract}
Inspired by group testing algorithms and the coded computation paradigm, we propose and analyze a novel multiple access scheme for detecting active users in large-scale networks. The scheme consists of a simple randomized detection algorithm that uses computation coding as intermediate steps for computing logical disjunction functions over the multiple access channel (MAC). First we show that given an efficient MAC code for disjunction computation the algorithm requires $O(k \log (\frac{N}{k }))$ decision steps for detecting $k$ active users out of $N+k$ users. Subsequently we present a simple suboptimal code for a class of MACs with arbitrarily varying sub-gaussian noise that uniformly requires $O (k \log (N) \max \{ \log k , \log \log N \} )$ channel uses for solving the activity detection problem. This shows that even in the presence of noise an efficient detection of active users is possible. Our approach reveals that the true crux of the matter lies in constructing efficient codes for computing disjunctions over a MAC.
\end{abstract}
\section{Introduction}
\label{sec:intro}

\subsection{Motivation}
Novel Machine Type Communication (MTC) applications, which often involve a massive deployment of MTC devices, pose some fundamental challenges. One of these is the development of new uncoordinated random access schemes that incur a small computation and communication overhead. An inherent feature of any random access scheme is the ability to identify the set of active devices that compete for access to the channel. This problem is referred to as the user activity detection problem. Unfortunately, conventional pilot-based approaches to this problem entail unacceptable signaling overhead in envisioned massive MTC scenarios, because the overhead of such schemes scales linearly with the total number of deployed devices. On the positive side, in most MTC applications, the set of active devices at any given point in time will be smaller by orders of magnitude than the set of all deployed devices. This can be exploited to reduce the overhead significantly so that it scales logarithmically with the total number of deployed devices.

\subsection{Related Work}
The problem of user activity detection in large-scale networks has been already addressed in the context of compressive sensing in \cite{Kueng_Jung_16,Kueng_Jung_16_long} where the authors show the existence of a non-adaptive optimal activity detection strategy. The approach in \cite{Kueng_Jung_16,Kueng_Jung_16_long} relies heavily on methods from high dimensional probability and on methods for estimating certain partial sums affected by noise.

In contrast to this, we propose to decompose the user activity detection problem into two sub-problems: 1) The problem of finding a suitable computational channel code, and 2) the problem of using this code to detect the set of active nodes in a manner that is efficient in terms of communication overhead and computational resources used.

Computation over multiple-access channels, to the best of our knowledge, was first proposed for a special case in~\cite{Korner_Marton_79} and later formalized for a more general setting in~\cite{Nazer_Gastpar_07}. A special case that has received much attention is the computation of the modulo sum of sources over multiple-access channels because it yields a more efficient approach to network coding which is called Physical Layer Network Coding~\cite{Nazer_Gastpar_11}. For the binary case, this can be viewed as computation coding for the logical Exclusive Or function. We, in contrast, need a computational channel code which computes logical disjunctions (or equivalently conjunctions).

The problem of determining the set of active nodes from such disjunctions is remarkably similar to Combinatorial Group Testing problems which were first proposed as a way to conduct syphilis tests for draftees of the U.S. armed forces during World War II more efficiently by pooling their blood samples in clever ways~\cite{Du_Hwang_99}. Although this idea was not implemented in the end, the approach remains relevant in biology and bioinformatics e.g. for drug screening or DNA sequencing~\cite{Kainkaryam_Woolf_09,Hwang_Liu_04}. Existing deterministic and randomized approaches in Combinatorial Group Testing~\cite{Du_Hwang_99}, however, usually aim at constructing pooling strategies that work for all distributions of a given number of active devices (or, using the terminology from the biology applications, defective samples) as opposed to constructing a randomized strategy which works well with high probability for an unknown, but fixed set of active devices. There is an approach to pooling samples for DNA tests called Random Incidence Designs~\cite{Hwang_Liu_04} (a variation of which appeared first in a purely mathematical context in~\cite{Erdos_Renyi_63}) that is similar to what we propose; however, the authors of \cite{Hwang_Liu_04} do not give a comparable analysis aimed at detecting a number of active devices (or in their language: positive clones) which is much smaller than the number of total devices. We point out that our results in Theorem~\ref{main-theorem} and Corollary~\ref{exact-set-corollary} may also be of interest as schemes for Combinatorial Group Testing.

A result similar to our Corollary~\ref{exact-set-corollary} appeared in an independent work~\cite{Zhang_Huang_17} using a slightly different scheme; we do point out, however, that our result is more general in the sense that we do not restrict the growth of the number of active devices depending on the number of total devices, while the result in~\cite{Zhang_Huang_17}, on the other hand, offers a slightly better constant ($(\log 2)^{-2} \approx 2.1$ where we have $\eulerNum \approx 2.7$).

\subsection{Outline}
In the following section, we state the underlying problem, while Section~\ref{scheme} introduces the framework for activity detection. In Section~\ref{sec:analysis}, we provide an analysis of the proposed scheme. Section~\ref{sec:simulations} contains simulation results evaluating the impact of true common randomness versus using pseudo-random numbers for the execution of the scheme. In Section~\ref{coding}, we give a simple suboptimal code enabling detection of active users simultaneously and uniformly for a class of MACs with additive sub-gaussian noise satisfying an upper bound on the sub-gaussian norm of the noise random variables.

\subsection{Notation}
We use $\eulerNum$ to denote Euler's number and $\log$ to denote the logarithm with base $\eulerNum$.

\section{Problem Statement and Objectives}
\label{sec:prob}
We consider a multiple access channel with a single receiver and $\nInactive + \nActive$ transmitters (also called nodes). The nodes are indexed by numbers $1, \dots, \nActive + \nInactive$ in some arbitrary order, which is known both to the receiver and to the nodes. Of these $\nInactive + \nActive$ nodes, $\nInactive$ are considered to be \emph{inactive}, while the remaining $\nActive$ are \emph{active}. Each node knows only its own state which is either active or inactive. The notion of being active could mean that a node wishes to make a subsequent transmission; but for the sake of this problem, being active or inactive is just a labeling dividing the nodes into two classes. The problem of interest in this paper is the following:
\begin{Prob}
\label{prob:general}
Given a total number $\channelBlockLength\in\naturals$ of channel uses, find a scheme that enables the nodes to encode the information whether or not they are active as $\channelBlockLength$ channel inputs in such a way that the receiver can determine with a sufficiently small probability of error which of the $\nActive + \nInactive$ nodes are active.
\end{Prob}
Our objective is therefore to design a multiple access communication scheme that enables the receiver to recover the indices of the active nodes from $\channelBlockLength$ uses of the multiple access channel, while meeting a given requirement on the error probability (for the exact definition of the error probability, we refer the reader to Theorem \ref{main-theorem}). At this point, it is important to emphasize that our goal is \emph{not} to design an optimal scheme for a given channel in the sense of minimizing the number of channel uses for some given error probability among all possible multiple access schemes. In fact, in this paper, we introduce a computation code as a layer of abstraction between our proposed detection scheme and the channel.
\begin{definition}
\label{dfn:disjunctioncode}
Let $\channel$ be a memoryless multiple-access channel defined by input alphabets $\channelInputAlphabet{1}, \dots, \channelInputAlphabet{\nChannelEndpoints}$, output alphabet $\channelOutputAlphabet$ (the alphabets need not be discrete) and stochastic kernels $(\channelKernel{\channelInstant})_{\channelInstant \in \{1,\dots\}}$ that define the transition from channel inputs $(\channelIn{1,\channelInstant},\dots,\channelIn{\nChannelEndpoints,\channelInstant}) \in \channelInputAlphabet{1} \times \dots \times \channelInputAlphabet{\nChannelEndpoints}$ to a channel output $\channelOut_{\channelInstant} \in \channelOutputAlphabet$ at time instant $\channelInstant$ (in the case of discrete input and output alphabets, the stochastic kernels can be thought of as matrices of conditional probabilities).

An \emph{$(\channelBlockLength,\nSlots,\channelErrorprob)$-disjunction code} for $\channel$ consists of encoding functions for $\indexChannelEndpoints \in \{1, \dots, \nChannelEndpoints\}$
\[
\channelEncoder{\indexChannelEndpoints}: \{\text{\emph{true}},\text{\emph{false}}\}^\nSlots \rightarrow \channelInputAlphabet{\indexChannelEndpoints}^\channelBlockLength
\]
and a decoding function
\[
\channelDecoder: \channelOutputAlphabet^\channelBlockLength \rightarrow \{\text{\emph{true}},\text{\emph{false}}\}^\nSlots,
\]
such that
\begin{align*}
\Probability \big(\channelDecoder(\channelOut_1, \dots, \channelOut_\channelBlockLength) \neq \disjunctionFunction(\channelMessage{1}, \dots, \channelMessage{\nChannelEndpoints}) \big| &\phantom{\big)}\\
\forall \indexChannelEndpoints (\channelIn{\indexChannelEndpoints,1}, \dots, \channelIn{\indexChannelEndpoints,\channelBlockLength}) = \channelEncoder{\indexChannelEndpoints}(\channelMessage{\indexChannelEndpoints}) &\big) \leq \channelErrorprob
\end{align*}
for every $\channelMessage{1}, \dots, \channelMessage{\nChannelEndpoints} \in \{\text{\emph{true}},\text{\emph{false}}\}^\nSlots$, where
\[
\disjunctionFunction: \left(
\begin{pmatrix}                           
\channelMessage{1}^{(1)} \\ \vdots \\ \channelMessage{1}^{(\nSlots)}
\end{pmatrix}
, \dots,
\begin{pmatrix}
\channelMessage{\nChannelEndpoints}^{(1)} \\ \vdots \\ \channelMessage{\nChannelEndpoints}^{(\nSlots)}
\end{pmatrix}
\right)
\mapsto
\begin{pmatrix}
\channelMessage{1}^{(1)} \vee \dots \vee \channelMessage{\nChannelEndpoints}^{(1)} \\
\vdots \\
\channelMessage{1}^{(\nSlots)} \vee \dots \vee \channelMessage{\nChannelEndpoints}^{(\nSlots)}
\end{pmatrix}
\]
calculates logical disjunctions component-wise.
\end{definition}
Throughout Sections~\ref{scheme} and~\ref{sec:analysis}, we make the following assumption:
\begin{Assume}
\label{as:channelcode_existence}
There exists a family of $(\channelBlockLength,\nSlots,\channelErrorprob)$-disjunction codes for arbitrarily small $\channelErrorprob > 0$ and sufficiently large $\nSlots \in \naturals$ such that $\channelBlockLength/\nSlots$ is bounded.
\end{Assume}
In Section~\ref{coding}, we present crude codes for a class of additive noise channels with unbounded $\channelBlockLength/\nSlots$ that do, however, achieve practically relevant bounds on the number of channel uses needed to guarantee low overall errors for our scheme. Classifying channels for which good disjunction codes in the sense of Assumption~\ref{as:channelcode_existence} exist and finding such codes remain open problems for future research.
\section{Scheme}
\label{scheme}
The scheme consists of $\nSlots$ slots in each of which one dis\-junc\-tion needs to be computed over the channel. We use an $(\channelBlockLength,\nSlots,\channelErrorprob)$-disjunction code as in Assumption~\ref{as:channelcode_existence} to convey the resulting $\nSlots$ disjunctions.
\begin{remark}
There will be no dependence between different slots, so a block coding scheme as in Definition~\ref{dfn:disjunctioncode} that calls for simultaneous encoding of the input bits and simultaneous decoding of the output bits can be used.
\end{remark}
The receiver maintains a set of \emph{potentially active} nodes. All active nodes are contained in this set at all times. By $\setPotActive{\indexSlots}$ we denote the set of potentially active nodes after slot $\indexSlots$. We define the number of potentially but not actually active nodes after slot $\indexSlots$ as $\potActive{\indexSlots} := \cardinality{\setPotActive{\indexSlots}}-\nActive$. The scheme works as follows:
\begin{enumerate}
 \item At the beginning, all $\nInactive + \nActive$ nodes are defined to be potentially active, i.e. $\setPotActive{0}$ consists of all $\nInactive+\nActive$ nodes. We start the scheme with slot $\indexSlots := 1$. 
 \item \label{scheme-slot-begin} At the beginning of slot $\indexSlots$, a set of \emph{chosen} nodes is determined randomly. We assume common randomness at the receiver and all the nodes, i.e. every node and the receiver know which nodes are chosen. Each node is chosen independently with probability $\chooseProbability$. We often use $\chooseProbabilityInv := 1 - \chooseProbability$ to denote the probability that a given node is not chosen. By $\setPotActiveChosen{\indexSlots}$, we denote the set of nodes which are chosen but not active.
 \item Each node uses the following rule to determine the truth value that it contributes to the logical disjunction conveyed over the channel in slot $\indexSlots$: 
\begin{itemize}
\item[a)] It sends the message \emph{true} if it is both active and chosen.
\item[b)] It sends the message \emph{false} otherwise.
\end{itemize}
 \item Upon decoding the logical disjunction of the transmitted messages, the receiver applies the following rule:
\begin{itemize}
\item[a)] If the receiver decodes \emph{true}, it can only conclude that at least one of the chosen nodes is active. It discards this information and sets $\setPotActive{\indexSlots} := \setPotActive{\indexSlots-1}$.
 \item[b)] If the receiver decodes \emph{false}, it concludes that none of the chosen nodes are active and removes all chosen nodes from the set of potentially active nodes by setting $\setPotActive{\indexSlots} := \setPotActive{\indexSlots-1} \setminus \setPotActiveChosen{\indexSlots}$. 
\end{itemize}
 \item If $\indexSlots=\nSlots$, the scheme terminates with the set $\setPotActive{\nSlots}$ as output. Otherwise, we increment $\indexSlots$ by $1$ and continue from step~\ref{scheme-slot-begin}.
\end{enumerate}
It is shown in Section~\ref{sec:analysis} that for any $\errorprob > 0$, $\shareInactive > 0$ and sufficiently large $\nSlots$, the cardinality of the output $\setPotActive{\nSlots}$ is less than $\nActive(\shareInactive+1)$ with probability $1-\errorprob$. Also, the output clearly contains all active nodes.
\begin{remark}
Since $\setPotActive{\nSlots}$ has a size linear in $\nActive$, narrowing down this set to the exact set of active nodes can be done in time linear in $\nActive$; one possibility would be to count the number of active nodes in certain subsets of $\setPotActive{\nSlots}$ or to at least obtain parity information about these numbers and then solve a linear equation system to determine which nodes are active. An additional coding technique would have to be used that can count or obtain this parity information in an efficient way, for which the schemes employed in Physical Layer Network Coding~\cite{Nazer_Gastpar_11} may be promising candidates. This issue is however not addressed in this paper. 
\end{remark}
\begin{remark}
A major advantage of this approach is that the algorithm used at the nodes and the receiver is simple in terms of both conceptual complexity and computational resources used. The amount of memory needed in total as well as the computational effort for the computation during each slot is constant at the transmitters and linear in $\nInactive + \nActive$ at the receiver.
\end{remark}
\section{Analysis}
\label{sec:analysis}
In this section, we analyze the performance of the scheme described in Section~\ref{scheme}. We use the definitions introduced before.
\begin{theorem}
\label{main-theorem}
There exists $\chooseProbability \in (0,1)$ such that $\Probability(\potActive{\nSlots} \geq \shareInactive \nActive) \leq \errorprob$ provided that
\begin{equation}
\label{main-theorem-hypothesis}
\nSlots \geq \eulerNum (\nActive+1) \left( \log\frac{\nInactive}{\nActive} + \log\frac{1}{\errorprob} + \log\frac{1}{\shareInactive} \right).
\end{equation}
\end{theorem}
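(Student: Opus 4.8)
The plan is to bound the expected number $\Expectation[\potActive{\nSlots}]$ of surviving inactive nodes and then convert this into the desired tail bound through Markov's inequality. The structural fact that makes everything tractable is that no active node is ever removed, so all $\nActive$ active nodes belong to $\setPotActive{\indexSlots}$ for every $\indexSlots$. Consequently, in any slot the decoded disjunction is \emph{false} precisely when none of the $\nActive$ active nodes is chosen, an event of probability $\chooseProbabilityInv^{\nActive}$ that is driven entirely by the fresh, independent common randomness of that slot.

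First I would track a single fixed inactive node $b$ that is still potentially active. In slot $\indexSlots$, $b$ is removed exactly when the disjunction is \emph{false} \emph{and} $b$ is chosen; by independence of the choices this happens with probability $\chooseProbability\chooseProbabilityInv^{\nActive}$, and this probability is the same in every slot and independent of the history, since the active set never shrinks and the randomness is drawn afresh each slot. Because $b$ stays in the set until the first such removal event, it survives all $\nSlots$ slots with probability exactly $(1 - \chooseProbability\chooseProbabilityInv^{\nActive})^{\nSlots}$. Summing the survival indicators over all $\nInactive$ inactive nodes and using linearity of expectation yields
\[
\Expectation[\potActive{\nSlots}] = \nInactive \left(1 - \chooseProbability\chooseProbabilityInv^{\nActive}\right)^{\nSlots}.
\]

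Next I would choose $\chooseProbability$ to maximize the per-slot removal probability $\chooseProbability\chooseProbabilityInv^{\nActive} = \chooseProbability(1-\chooseProbability)^{\nActive}$. Differentiating shows the optimum lies at $\chooseProbability = 1/(\nActive+1)$, for which $\chooseProbabilityInv^{\nActive} = (1 - 1/(\nActive+1))^{\nActive} > 1/\eulerNum$ by the elementary inequality $(1+1/\nActive)^{\nActive} < \eulerNum$; hence $\chooseProbability\chooseProbabilityInv^{\nActive} > 1/(\eulerNum(\nActive+1))$. Applying $1 - \shortTermVarTwo \le \eulerNum^{-\shortTermVarTwo}$ then gives $\Expectation[\potActive{\nSlots}] < \nInactive \exp(-\nSlots/(\eulerNum(\nActive+1)))$. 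Finally, Markov's inequality bounds $\Probability(\potActive{\nSlots} \ge \shareInactive\nActive) \le \Expectation[\potActive{\nSlots}]/(\shareInactive\nActive)$, and substituting the hypothesis \eqref{main-theorem-hypothesis} on $\nSlots$ makes the right-hand side at most $\errorprob$, after taking logarithms and using $\log(\nInactive/(\shareInactive\nActive\errorprob)) = \log(\nInactive/\nActive) + \log(1/\errorprob) + \log(1/\shareInactive)$.

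The computations are routine; the one step that requires genuine care is the per-node survival argument, namely establishing that each inactive node's removal events occur with probability exactly $\chooseProbability\chooseProbabilityInv^{\nActive}$ and are independent across slots, irrespective of the history. This hinges on the two facts emphasized above: the active set never loses a member, so the false-disjunction probability is the fixed constant $\chooseProbabilityInv^{\nActive}$ in every slot, and the common randomness is independent from slot to slot. Once this product structure is secured, the optimization over $\chooseProbability$ and the Markov estimate follow without further obstacle.
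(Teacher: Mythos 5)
Your proof is correct, and its overall skeleton matches the paper's: obtain the exact formula $\Expectation \potActive{\nSlots} = \nInactive\left(1-\chooseProbability\chooseProbabilityInv^{\nActive}\right)^{\nSlots}$, optimize to $\chooseProbability = (\nActive+1)^{-1}$, bound $\chooseProbability\chooseProbabilityInv^{\nActive} > \left(\eulerNum(\nActive+1)\right)^{-1}$ via $(1+1/\nActive)^{\nActive} < \eulerNum$, and finish with \Markov's inequality; your use of $1-\shortTermVarTwo \le \eulerNum^{-\shortTermVarTwo}$ is the same estimate as the paper's $-\log(1-\shortTermVarTwo)\ge\shortTermVarTwo$ in Lemma~\ref{lemma-limit}. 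Where you genuinely differ is in how the expectation formula is derived. The paper (Lemma~\ref{lemma-expectation}) runs a slot-by-slot recursion: it writes $\potActive{\indexSlots} = \chooseActiveIndicator{\indexSlots}\potActive{\indexSlots-1} + (1-\chooseActiveIndicator{\indexSlots})\potActiveNotChosen{\indexSlots}$ with $\chooseActiveIndicator{\indexSlots}$ indicating whether some active node was chosen and $\potActiveNotChosen{\indexSlots}$ conditionally binomial, verifies the required independences, and deduces $\Expectation\potActive{\indexSlots} = (1-\chooseProbability\chooseProbabilityInv^{\nActive})\Expectation\potActive{\indexSlots-1}$. You instead decompose $\potActive{\nSlots}$ into per-node survival indicators: a fixed inactive node is absent from $\setPotActive{\nSlots}$ iff in some slot it was simultaneously chosen and the disjunction decoded \emph{false}, and these removal events are i.i.d.\ across slots with probability exactly $\chooseProbability\chooseProbabilityInv^{\nActive}$ — precisely because the active set never shrinks and the common randomness is fresh in each slot, the point you correctly single out as the crux. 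Your per-node argument is more elementary and makes the independence structure transparent (no conditional expectations, no binomial bookkeeping); the paper's recursion is somewhat more robust, in that it would carry over to adaptive variants where the choice probability in slot $\indexSlots$ depends on the history, a situation in which the per-node product argument would no longer give independence across slots. Both routes yield the identical closed form, hence the same final bound.
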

By taking $\shareInactive = \nActive^{-1}$, we obtain the following corollary:
\begin{cor}
\label{exact-set-corollary}
There exists $\chooseProbability \in (0,1)$ such that $\Probability(\potActive{\nSlots} > 0) \leq \errorprob$ provided that
\begin{equation}
\label{exact-set-corollary-hypothesis}
\nSlots \geq \eulerNum (\nActive+1) \left( \log\nInactive + \log\frac{1}{\errorprob} \right).
\end{equation}
\end{cor}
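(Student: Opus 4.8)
The plan is to obtain the corollary directly from Theorem~\ref{main-theorem} by specializing the free parameter $\shareInactive$, so there is essentially no new work beyond a substitution. First I would invoke the theorem with the particular choice $\shareInactive = \nActive^{-1}$; this is admissible because the theorem asserts the conclusion for every $\shareInactive > 0$. With this choice the threshold inside the probability becomes $\shareInactive \nActive = 1$, so the theorem supplies some $\chooseProbability \in (0,1)$ for which $\Probability(\potActive{\nSlots} \geq 1) \leq \errorprob$, provided $\nSlots$ satisfies \eqref{main-theorem-hypothesis} evaluated at $\shareInactive = \nActive^{-1}$.

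The only point carrying any content is that this ``$\geq 1$'' event coincides with the exact-recovery event ``$> 0$''. By construction the set $\setPotActive{\nSlots}$ contains every active node at all times, hence $\cardinality{\setPotActive{\nSlots}} \geq \nActive$ and $\potActive{\nSlots} = \cardinality{\setPotActive{\nSlots}} - \nActive$ is a non-negative integer. For an integer-valued quantity the conditions $\potActive{\nSlots} \geq 1$ and $\potActive{\nSlots} > 0$ are identical, so $\Probability(\potActive{\nSlots} > 0) = \Probability(\potActive{\nSlots} \geq 1) \leq \errorprob$, which is the claimed bound.

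It then remains to verify that the hypothesis \eqref{main-theorem-hypothesis} collapses to \eqref{exact-set-corollary-hypothesis} under the substitution. Here $\log\frac{1}{\shareInactive} = \log \nActive$, and absorbing this into the first term via $\log\frac{\nInactive}{\nActive} + \log\nActive = \log\nInactive$ leaves exactly $\nSlots \geq \eulerNum(\nActive+1)\bigl(\log\nInactive + \log\frac{1}{\errorprob}\bigr)$, i.e. \eqref{exact-set-corollary-hypothesis}. Since every step is either an equality or a direct application of the theorem, I do not expect a genuine obstacle; the one thing to be careful about is the integrality observation, as that is precisely what upgrades the theorem's control of the residual set size from a bound of order $\nActive$ down to a guarantee of recovering the active set exactly.
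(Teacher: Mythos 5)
Your proposal is correct and matches the paper's own derivation exactly: the paper obtains the corollary by setting $\shareInactive = \nActive^{-1}$ in Theorem~\ref{main-theorem}, with the integrality of $\potActive{\nSlots}$ (equating the events $\potActive{\nSlots} \geq 1$ and $\potActive{\nSlots} > 0$) and the simplification $\log\frac{\nInactive}{\nActive} + \log\nActive = \log\nInactive$ left implicit, which you have simply made explicit.
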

Thus, for the cost of a slightly worse bound with regard to dependency on $\nActive$, we can determine the exact set of active nodes without having to employ another scheme to make a final determination.

In order to prove Theorem~\ref{main-theorem}, we will bound the expectation of $\potActive{\nSlots}$ in the following lemmas and then use \Markov's inequality to bound the probability.
\begin{lemma}
\label{lemma-expectation}
$\Expectation \potActive{\nSlots} \leq \shareInactive \nActive \errorprob$ if
\begin{equation}
\label{lemma-expectation-hypothesis}
\nSlots \geq \left(-\frac{1}{\log\left(1-\chooseProbability \chooseProbabilityInv^\nActive\right)} \right) \left( \log \frac{\nInactive}{\shareInactive \nActive \errorprob} \right).
\end{equation}
\end{lemma}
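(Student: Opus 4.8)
The plan is to evaluate $\Expectation \potActive{\nSlots}$ exactly by linearity of expectation over the inactive nodes and then check that the hypothesis on $\nSlots$ forces this expectation below $\shareInactive \nActive \errorprob$. Working under the idealization that the disjunction is decoded correctly in every slot, I would first observe that no active node is ever removed: a removal happens only when the decoded disjunction is \emph{false}, which by the definition of $\disjunctionFunction$ means that none of the chosen nodes is active. Hence $\potActive{\nSlots}$ counts precisely the inactive nodes that survive in $\setPotActive{\nSlots}$, and I can write $\Expectation \potActive{\nSlots} = \sum_v \Probability(v \in \setPotActive{\nSlots})$, the sum ranging over the $\nInactive$ inactive nodes.

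Next I would fix one inactive node $v$ and analyze a single slot. The node $v$ is removed in slot $\indexSlots$ exactly when $v$ is chosen and the decoded disjunction is \emph{false}, and the latter occurs iff none of the $\nActive$ active nodes is chosen. Since $v$ is inactive, its own choice is independent of the choices of the active nodes, so this joint event has probability $\chooseProbability \chooseProbabilityInv^\nActive$. Because the choices in distinct slots use fresh, independent randomness, the per-slot removal events are independent across the $\nSlots$ slots, and $v$ survives iff it is removed in none of them. Therefore $\Probability(v \in \setPotActive{\nSlots}) = (1 - \chooseProbability \chooseProbabilityInv^\nActive)^\nSlots$, and by linearity $\Expectation \potActive{\nSlots} = \nInactive (1 - \chooseProbability \chooseProbabilityInv^\nActive)^\nSlots$.

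Finally I would substitute and solve. Taking logarithms, the target bound $\Expectation \potActive{\nSlots} \le \shareInactive \nActive \errorprob$ is equivalent to $\log \nInactive + \nSlots \log(1 - \chooseProbability \chooseProbabilityInv^\nActive) \le \log(\shareInactive \nActive \errorprob)$, i.e.\ $\nSlots \log(1 - \chooseProbability \chooseProbabilityInv^\nActive) \le -\log \frac{\nInactive}{\shareInactive \nActive \errorprob}$. Since $\chooseProbability \chooseProbabilityInv^\nActive \in (0,1)$, the logarithm on the left is strictly negative; dividing by it reverses the inequality and yields exactly \eqref{lemma-expectation-hypothesis}.

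The step that needs the most care is the decoupling in the second paragraph: I must justify that the survival of $v$ depends only on $v$'s own choices and on the active nodes' choices (not on the other inactive nodes, nor on whether $v$ was already removed earlier), so that the survival probability genuinely factorizes across independent slots. I would also flag explicitly that this whole computation is carried out under the assumption of correct decoding in each slot, so that the error probability of the underlying disjunction code is accounted for separately (for instance by a union bound over the $\nSlots$ slots) rather than inside this lemma, and I would double-check the sign bookkeeping when dividing by $\log(1 - \chooseProbability \chooseProbabilityInv^\nActive) < 0$.
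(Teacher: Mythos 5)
Your proof is correct, and it reaches the paper's key identity $\Expectation \potActive{\nSlots} = \nInactive\left(1-\chooseProbability \chooseProbabilityInv^\nActive\right)^\nSlots$ by a genuinely different decomposition. The paper tracks the aggregate count across time: it introduces the indicator $\chooseActiveIndicator{\indexSlots}$ of the event that some active node is chosen in slot $\indexSlots$, writes the recursion $\potActive{\indexSlots} = \chooseActiveIndicator{\indexSlots}\potActive{\indexSlots-1} + (1-\chooseActiveIndicator{\indexSlots})\potActiveNotChosen{\indexSlots}$ where $\potActiveNotChosen{\indexSlots}$ is conditionally binomial given $\potActive{\indexSlots-1}$, and combines the law of total expectation with the independence of $\chooseActiveIndicator{\indexSlots}$ from $\potActive{\indexSlots-1}$ and $\potActiveNotChosen{\indexSlots}$ to get $\Expectation\potActive{\indexSlots} = (1-\chooseProbability \chooseProbabilityInv^\nActive)\,\Expectation\potActive{\indexSlots-1}$, which it then unrolls. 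You instead decompose over nodes: linearity of expectation over the $\nInactive$ inactive nodes, plus the observation that a fixed inactive node $v$ survives iff in no slot is it chosen while simultaneously no active node is chosen --- an intersection of $\nSlots$ independent per-slot events, each of probability $1-\chooseProbability \chooseProbabilityInv^\nActive$. Your decoupling step is sound: under correct decoding, removal of $v$ in slot $\indexSlots$ depends only on $v$'s own choice and the active nodes' choices in that slot, is unaffected by the other inactive nodes, and prior removal is immaterial since $\setPotActive{\indexSlots}$ only shrinks, so survival is exactly the stated intersection; note only that survival events of \emph{different} inactive nodes are correlated (they share the active nodes' randomness), which is harmless because linearity of expectation needs no independence. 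Your treatment of decoding errors also matches the paper: the lemma is implicitly stated for error-free disjunctions, and channel-code errors are accounted for separately by a union bound in the paper's final corollary. The closing algebra (taking logarithms and dividing by the negative quantity $\log\left(1-\chooseProbability \chooseProbabilityInv^\nActive\right)$) is identical in both proofs. Your per-node route is shorter and more transparent, avoiding the conditional-distribution and recursion bookkeeping; the paper's slot-by-slot recursion retains the distributional structure of the process (not just its mean), which could be convenient for refinements such as higher-moment or concentration analyses.
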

A similar result already appeared in the context of pooling DNA tests in~\cite[Corollary 3.4]{Hwang_Liu_04}, but for the sake of completeness, we give here an easier and more self-contained proof.
\begin{proof}
We define random variables $(\chooseActiveIndicator{\indexSlots})_{\indexSlots \in \{1, \dots, \nSlots\}}$ by
\[
\chooseActiveIndicator{\indexSlots} :=
\begin{cases}
 0, &\text{only inactive nodes are chosen in slot } \indexSlots \\
 1, &\text{otherwise.}
\end{cases}
\]
Note that $\Expectation(\chooseActiveIndicator{\indexSlots}) = 1 - \chooseProbabilityInv^\nActive$ and $\Expectation(1-\chooseActiveIndicator{\indexSlots}) = \chooseProbabilityInv^\nActive$.

Denote with $\potActiveNotChosen{\indexSlots}$ the number of potentially but not actually active nodes which are not chosen in slot $i$. Clearly, for any $\shortTermVarOne$ and $\indexSlots$ under the condition $\potActive{\indexSlots-1} = \shortTermVarOne$, $\potActiveNotChosen{\indexSlots}$ is binomially distributed with $\shortTermVarOne$ trials and success probability $\chooseProbabilityInv$. Clearly,
\[ \potActive{\indexSlots} = \chooseActiveIndicator{\indexSlots} \potActive{\indexSlots-1} + (1-\chooseActiveIndicator{\indexSlots}) \potActiveNotChosen{\indexSlots}. \]
We observe that by the law of total expectation for all $\indexSlots$
\begin{align*}\Expectation \potActiveNotChosen{\indexSlots} &=
\sum_\shortTermVarOne \Probability(\potActive{\indexSlots-1}=\shortTermVarOne) \Expectation(\potActiveNotChosen{\indexSlots} | \potActive{\indexSlots-1} = \shortTermVarOne) \\ &=
q \sum_\shortTermVarOne \shortTermVarOne \Probability(\potActive{\indexSlots-1}=\shortTermVarOne) =
q \Expectation \potActive{\indexSlots-1}
\end{align*}
and further that because nodes are chosen or not chosen independently in different slots, for all $\indexSlots$, $\chooseActiveIndicator{\indexSlots}$ is independent of $\potActive{\indexSlots'}$ for $\indexSlots' < \indexSlots$ and because different nodes are chosen or not chosen independently even within the same slot, $\chooseActiveIndicator{\indexSlots}$ is independent of $\potActiveNotChosen{\indexSlots'}$ for $\indexSlots' \leq \indexSlots$ (although, of course, for other values of $\indexSlots'$ there may be dependences).

We can now use these observations to calculate
\begin{align*}
\Expectation \potActive{\indexSlots}
&= \Expectation \chooseActiveIndicator{\indexSlots} \Expectation \potActive{\indexSlots - 1} + \Expectation (1-\chooseActiveIndicator{\indexSlots}) \Expectation \potActiveNotChosen{\indexSlots} \\
&= (1-\chooseProbabilityInv^\nActive) \Expectation \potActive{\indexSlots - 1} + \chooseProbabilityInv^\nActive \cdot \chooseProbabilityInv \Expectation \potActive{\indexSlots - 1} \\
&= \Expectation \potActive{\indexSlots - 1} (1-\chooseProbability \chooseProbabilityInv^\nActive).
\end{align*}
From this recursion and the definition $\potActive{0}=\nInactive$, we conclude
\[ \Expectation \potActive{\indexSlots} = \nInactive \left(1- \chooseProbability \chooseProbabilityInv^\nActive \right)^\indexSlots. \]
In particular, for $\indexSlots=\nSlots$, we have
\[ \Expectation \potActive{\nSlots} = \nInactive \left(1- \chooseProbability \chooseProbabilityInv^\nActive \right)^\nSlots \leq \shareInactive \nActive \errorprob. \]
Taking the logarithm on both sides and solving for $\nSlots$ yields~(\ref{lemma-expectation-hypothesis}), which completets the proof.
\end{proof}
Next we show how to choose $\chooseProbability$ so as to minimize the factor $(-\log(1-\chooseProbability \chooseProbabilityInv^\nActive))^{-1}$ in~(\ref{lemma-expectation-hypothesis}), which is equivalent to maximizing $\chooseProbability \chooseProbabilityInv^\nActive$. The derivative has zeros at $\chooseProbability=1$ and $\chooseProbability=(\nActive+1)^{-1}$. Since $\chooseProbability \chooseProbabilityInv^\nActive = 0$ for $\chooseProbability=0$ and $\chooseProbability=1$, the maximum must be at $\chooseProbability=(\nActive+1)^{-1}$.
\begin{lemma}
\label{lemma-limit}
If $\chooseProbability = (\nActive+1)^{-1}$, then
\[ -\frac{1}{\log\left( 1-\chooseProbability \chooseProbabilityInv^\nActive \right)} < \eulerNum (\nActive + 1). \]
\end{lemma}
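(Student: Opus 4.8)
The plan is to reduce the claim to two elementary inequalities with a favorable amount of slack. Since $0 < \chooseProbability \chooseProbabilityInv^\nActive < 1$, the quantity $\log(1-\chooseProbability\chooseProbabilityInv^\nActive)$ is strictly negative, so $-1/\log(1-\chooseProbability\chooseProbabilityInv^\nActive)$ is a well-defined positive number and the asserted bound is equivalent, after taking reciprocals, to $-\log(1-\chooseProbability\chooseProbabilityInv^\nActive) > \frac{1}{\eulerNum(\nActive+1)}$. Everything then comes down to producing this lower bound on $-\log(1-\chooseProbability\chooseProbabilityInv^\nActive)$.

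First I would discard the logarithm using the elementary inequality $-\log(1-\shortTermVarTwo) > \shortTermVarTwo$, valid for all $\shortTermVarTwo \in (0,1)$ (immediate from the power series $-\log(1-\shortTermVarTwo) = \shortTermVarTwo + \shortTermVarTwo^2/2 + \dots$, or from strict concavity of $\log$). Applying it with $\shortTermVarTwo = \chooseProbability\chooseProbabilityInv^\nActive$ yields $-\log(1-\chooseProbability\chooseProbabilityInv^\nActive) > \chooseProbability\chooseProbabilityInv^\nActive$, so it suffices to prove the cleaner inequality $\chooseProbability\chooseProbabilityInv^\nActive > \frac{1}{\eulerNum(\nActive+1)}$.

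Next I would substitute the specific value $\chooseProbability = (\nActive+1)^{-1}$, so that $\chooseProbabilityInv = \nActive/(\nActive+1) = 1 - (\nActive+1)^{-1}$. This turns the target into $\frac{1}{\nActive+1}\left(1-\frac{1}{\nActive+1}\right)^\nActive > \frac{1}{\eulerNum(\nActive+1)}$, i.e.\ into $\left(1-\frac{1}{\nActive+1}\right)^\nActive > \eulerNum^{-1}$. Taking reciprocals one more time, this is exactly $\left(1+\frac{1}{\nActive}\right)^\nActive < \eulerNum$, the standard strict inequality expressing that the sequence $(1+1/\nActive)^\nActive$ increases to $\eulerNum$. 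Chaining the three steps gives $-\log(1-\chooseProbability\chooseProbabilityInv^\nActive) > \chooseProbability\chooseProbabilityInv^\nActive > \frac{1}{\eulerNum(\nActive+1)}$ and hence the lemma.

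The computation is short, and there is no deep obstacle; the only point demanding care is the bookkeeping of inequality directions, since we pass to reciprocals twice (once using the negativity of the logarithm, once using the monotonicity just invoked) and a sign slip would reverse the conclusion. The more conceptual remark worth making is that the crude bound $-\log(1-\shortTermVarTwo) > \shortTermVarTwo$ is already sharp enough: because $\chooseProbability\chooseProbabilityInv^\nActive \to 0$ while simultaneously $\chooseProbability\chooseProbabilityInv^\nActive \sim \frac{1}{\eulerNum(\nActive+1)}$ as $\nActive \to \infty$, both inequalities in the chain become asymptotically tight, which is precisely why $\eulerNum(\nActive+1)$ is the correct limiting constant and cannot be replaced by a fixed multiple of $\nActive+1$ with a factor smaller than $\eulerNum$.
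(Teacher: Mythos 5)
Your proof is correct and follows essentially the same route as the paper's: both arguments rest on the elementary bound $-\log(1-\shortTermVarTwo) \geq \shortTermVarTwo$, the substitution $\chooseProbability\chooseProbabilityInv^\nActive = \frac{1}{\nActive+1}\left(\frac{\nActive}{\nActive+1}\right)^\nActive$, and the strict inequality $\left(1+\frac{1}{\nActive}\right)^\nActive < \eulerNum$. The only difference is presentational (you argue on the reciprocal side, whereas the paper directly bounds $-1/\log\left(1-\chooseProbability\chooseProbabilityInv^\nActive\right) \leq 1/\left(\chooseProbability\chooseProbabilityInv^\nActive\right)$), and your closing observation about both steps being asymptotically tight is a correct and worthwhile addition.
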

\begin{proof}
Using the fact that $-\log(1-\shortTermVarTwo) \geq \shortTermVarTwo$ for $\shortTermVarTwo \leq 1$, and substituting $1-\chooseProbability$ for $\chooseProbabilityInv$ as well as $(\nActive+1)^{-1}$ for $\chooseProbability$, we obtain
\begin{align*}
-\frac{1}{\log\left( 1-\chooseProbability \chooseProbabilityInv^\nActive \right)} &\leq
\frac{1}{\chooseProbability \chooseProbabilityInv^\nActive} =
(\nActive+1) \left( \frac{\nActive+1}{\nActive} \right)^\nActive \\ &=
(\nActive+1) \left( 1 + \frac{1}{\nActive} \right)^\nActive <
(\nActive+1) \eulerNum,
\end{align*}
where in the last step we have used the fact that $\eulerNum>(1+1/\shortTermVarTwo)^{\shortTermVarTwo}$ for all $x>0$.
\end{proof}
\begin{proof}[Proof of Theorem~\ref{main-theorem}]
By Lemma~\ref{lemma-limit},~(\ref{main-theorem-hypothesis}) implies~(\ref{lemma-expectation-hypothesis}) if we choose $\chooseProbability = (\nActive+1)^{-1}$. So, by Lemma~\ref{lemma-expectation}, we have $\Expectation \potActive{\nSlots} \leq \shareInactive \nActive \errorprob$. Hence, by \Markov's inequality we obtain
\[
\Probability \left( \potActive{\nSlots} \geq \shareInactive \nActive \right) \leq \frac{\Expectation \potActive{\nSlots}}{\shareInactive \nActive} \leq \frac{\shareInactive \nActive \errorprob}{\shareInactive \nActive} = \errorprob. \qedhere
\]
\end{proof}
\section{Simulations}
\label{sec:simulations}
Simulations of the proposed scheme were conducted for three different value pairs of $(\nInactive,\nActive)$. For each value pair and each of $1.2 \cdot 10^5$ simulation runs, the scheme was carried out until only actually active nodes were potentially active and the resulting values of $\nSlots$ were recorded. In Figure~\ref{fig:simulations}, we plotted $\nSlots$ versus the observed frequencies of simulation runs in which the exact set of active nodes was not determined at the receiver in fewer than $\nSlots$ slots. For comparison, we also plotted the corresponding theoretical error bound given by Corollary~\ref{exact-set-corollary}.

For the generation of random numbers called for by the scheme, we used the standard Mersenne Twister pseudo-random number generator shipped with Python~3.5~\cite{Python35PseudoRand} which was seeded before each simulation run with the hardware random number generator provided by Linux~4.4~\cite{Linux44urand}.

From the simulation results, we conclude that the bound provided for $\nSlots$ by Corollary~\ref{exact-set-corollary} is relatively tight for small errors and that it can be achieved using a standard pseudo-random number generator and thus the assumption that receiver and nodes can observe a common source of randomness might be relaxed in practice e.g. by having the receiver broadcast a random seed to all nodes which then use identical and identically seeded random number generators to carry out the scheme.
\begin{figure}
\begin{tikzpicture}
\definecolor{par1color}{HTML}{E66101}
\definecolor{par2color}{HTML}{FDB863}
\definecolor{par3color}{HTML}{5E3C99}
\begin{semilogyaxis} [
  xlabel={Number of slots $\nSlots$},
  ylabel={Error probability bound $\errorprob$ / Error frequency},
  ymin=1e-4,
  ymax=1,
  legend cell align=left,
  legend entries={
    {$(10^4, 20)$},{$(10^5, 20)$},{$(10^4, 30)$}
  }
]
\addplot[par1color,dashed,domain=1:2500,forget plot] {10000/exp(x/exp(1)/21)};
\addplot[par1color,solid,line width=2] table[header=false] {res_N_10000_k_20};

\addplot[par2color,dashed,domain=1:2500,forget plot] {100000/exp(x/exp(1)/21)};
\addplot[par2color,solid,line width=2] table[header=false] {res_N_100000_k_20};

\addplot[par3color,dashed,domain=1:2500,forget plot] {10000/exp(x/exp(1)/31)};
\addplot[par3color,solid,line width=2] table[header=false] {res_N_10000_k_30};
\end{semilogyaxis}
\end{tikzpicture}
\caption{Plots of observed error frequencies in the simulations as thick solid lines and theoretical error bounds according to Corollary~\ref{exact-set-corollary} in thinner dashed lines for various pairs $(\nInactive, \nActive)$.}
\label{fig:simulations}
\end{figure}
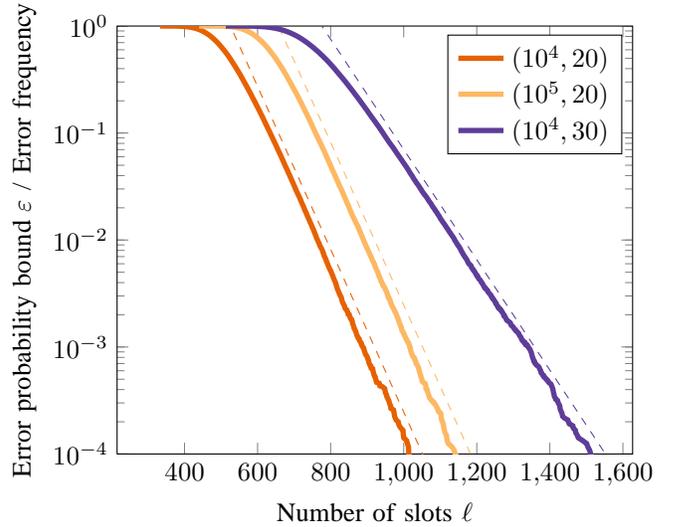
\section{Coding Disjunctions over Channels with Sub-Gaussian Noise}
\label{coding}
In this section, we construct a code for the class of channels with arbitrarily varying sub-gaussian noise (as defined below). This code does not satisfy Assumption~\ref{as:channelcode_existence}, but is close to it as $\channelBlockLength/\nSlots$ grows only logarithmically with $1/\channelErrorprob$. In conjunction with our scheme from Section~\ref{scheme}, it leads to good bounds on the total number of channel uses necessary to solve Problem~\ref{prob:general}.
\begin{definition}
A random variable $\shortTermRV$ is called \emph{sub-gaussian} if for some $\shortTermVarSubgauss \in \reals$, we have
\begin{equation}
\label{eqn:subgauss}
\forall \shortTermVarOne \geq 1~ (\Expectation\absolute{\shortTermRV}^\shortTermVarOne)^{1/\shortTermVarOne} \leq \shortTermVarSubgauss \sqrt{\shortTermVarOne}.
\end{equation}
If $\shortTermRV$ is sub-gaussian, the infimum over all $\shortTermVarSubgauss$ that satisfy~(\ref{eqn:subgauss}) exists and is called the \emph{sub-gaussian norm of $\shortTermRV$} and denoted by $\subgaussnorm{\shortTermRV}$.
\end{definition}
\begin{definition}
A channel as in Definition~\ref{dfn:disjunctioncode} is called a \emph{channel with arbitrarily varying sub-gaussian noise of norm at most $\subgaussnormMax$ and power constraint $\channelPowerLim$} if
\[
\channelInputAlphabet{1} = \dots = \channelInputAlphabet{\nChannelEndpoints} = \left[-\sqrt{\channelPowerLim},\sqrt{\channelPowerLim}\right],~ \channelOutputAlphabet=\reals 
\]
and the transition kernels are such that there exists an independent sequence $(\channelNoise_\channelInstant)_{\channelInstant \in \{1,\dots\}}$ of sub-gaussian random variables (called the \emph{noise}) with mean $0$ and sub-gaussian norm bounded by $\subgaussnormMax$ (in particular, this covers independent and identically distributed Gaussian noise) such that for any time instant $\channelInstant$,
\begin{equation}
\label{eqn:subgausschannel}
\channelOut_\channelInstant = \sum_{\shortTermVarOne=1}^{\nChannelEndpoints} \channelIn{\shortTermVarOne, \channelInstant} + \channelNoise_\channelInstant.
\end{equation}
Transmitters and receiver are not assumed to know anything about the noise random variables except for their independence and the bound $\subgaussnormMax$ on their sub-gaussian norm. Thus the employed coding scheme should work simultaneously and with uniform error bound for any such selection of noise random variables.
\end{definition}
\begin{theorem}
\label{thm:channel-code}
For every $\subgaussnormMax > 0, \channelPowerLim > 0, \channelErrorprob > 0, \nSlots \in \naturals$ and
\begin{equation}
\label{channel-error-probability}
m \geq \frac{\subgaussnormMax^2}{\channelPowerLim} \cdot \frac{\log\frac{1}{\channelErrorprob}+1}{\hoeffdingTypeConst},
\end{equation}
where $\hoeffdingTypeConst>0$ is an absolute constant, there is a code which is an $(\channelrep\nSlots,\nSlots,\channelErrorprob)$-disjunction code for every channel with arbitrarily varying sub-gaussian noise of norm at most $\subgaussnormMax$ and power constraint $\channelPowerLim$.
\end{theorem}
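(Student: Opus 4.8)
The plan is to build the code slot by slot, exploiting the fact that the channel~(\ref{eqn:subgausschannel}) superimposes the transmitted symbols \emph{coherently}, so that the presence of even a single \emph{true} transmitter shifts the conditional mean of the output away from the noise mean $0$. Concretely, I would split the $\channelrep\nSlots$ channel uses into $\nSlots$ disjoint blocks of length $\channelrep$, one per slot, and treat each slot separately. In the block belonging to slot $\indexSlots$, encoder $\channelEncoder{\indexChannelEndpoints}$ transmits the constant symbol $\sqrt{\channelPowerLim}$ at all $\channelrep$ instants if $\channelMessage{\indexChannelEndpoints}^{(\indexSlots)}=\text{\emph{true}}$ and the symbol $0$ otherwise; both symbols lie in the input alphabet $[-\sqrt{\channelPowerLim},\sqrt{\channelPowerLim}]$. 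Writing $J$ for the (unknown) number of transmitters that are \emph{true} in this slot, the output at instant $\channelInstant$ is $\channelOut_\channelInstant = J\sqrt{\channelPowerLim}+\channelNoise_\channelInstant$, so the disjunction is \emph{false} exactly when $J=0$ and \emph{true} when $J\geq 1$; note that larger $J$ only increases the separation, and the output alphabet $\reals$ imposes no saturation. The decoder $\channelDecoder$ forms the empirical mean $\bar{\channelOut}:=\channelrep^{-1}\sum_{\channelInstant}\channelOut_\channelInstant$ over the block and declares \emph{true} iff $\bar{\channelOut}\geq\tfrac12\sqrt{\channelPowerLim}$.

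The second step is to reduce the per-slot decoding error to a single deviation event for the noise average. If the disjunction is false then $\bar{\channelOut}=\channelrep^{-1}\sum_\channelInstant\channelNoise_\channelInstant$, and an error occurs only if this average is at least $\tfrac12\sqrt{\channelPowerLim}$; if the disjunction is true then $J\geq 1$, so declaring \emph{false} requires the noise average to fall below $\tfrac12\sqrt{\channelPowerLim}-J\sqrt{\channelPowerLim}\leq-\tfrac12\sqrt{\channelPowerLim}$. In either case a slot is decoded incorrectly only if $\big|\channelrep^{-1}\sum_{\channelInstant=1}^{\channelrep}\channelNoise_\channelInstant\big|\geq\tfrac12\sqrt{\channelPowerLim}$, an event depending on the noise alone; crucially, its probability will be controlled through the norm bound $\subgaussnormMax$ only, independently of the particular noise laws.

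The heart of the argument, and the step I expect to be the main obstacle, is a Hoeffding-type concentration inequality for a sum of independent, mean-zero, \emph{not necessarily identically distributed} sub-gaussian variables, phrased through the moment-based norm $\subgaussnorm{\cdot}$ of the definition. The key facts to assemble are that such a sum is again sub-gaussian, with squared norm $\subgaussnorm{\sum_{\channelInstant=1}^{\channelrep}\channelNoise_\channelInstant}^2$ bounded by an absolute constant times $\sum_{\channelInstant=1}^{\channelrep}\subgaussnorm{\channelNoise_\channelInstant}^2\leq\channelrep\,\subgaussnormMax^2$, and that a sub-gaussian variable obeys a Gaussian tail. Converting the moment definition~(\ref{eqn:subgauss}) into a bound on the moment generating function (or directly into a tail bound) with absolute constants only is the technical crux, and it is precisely where the constant $\hoeffdingTypeConst$ of~(\ref{channel-error-probability}) originates. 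Applying it to $s=\tfrac{\channelrep}{2}\sqrt{\channelPowerLim}$ and absorbing the resulting numerical factors into $\hoeffdingTypeConst$, the outcome I am aiming for is
\begin{equation*}
\Probability\!\left(\Big|\sum_{\channelInstant=1}^{\channelrep}\channelNoise_\channelInstant\Big|\geq\tfrac{\channelrep}{2}\sqrt{\channelPowerLim}\right)\leq 2\exp\!\left(-\frac{\hoeffdingTypeConst\,\channelrep\,\channelPowerLim}{\subgaussnormMax^2}\right).
\end{equation*}

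Finally I would close the loop: requiring the right-hand side to be at most $\channelErrorprob$ and solving for $\channelrep$ yields $\channelrep\geq\frac{\subgaussnormMax^2}{\channelPowerLim}\cdot\frac{\log(1/\channelErrorprob)+\log 2}{\hoeffdingTypeConst}$, which, since $\log 2<1$, is implied by hypothesis~(\ref{channel-error-probability}). Because the deviation bound uses only the norm bound $\subgaussnormMax$ and is oblivious to the adversarial, time-varying noise laws, this per-slot estimate holds simultaneously for every channel in the class, exactly the uniformity the definition demands. The only remaining point to treat carefully is the passage from the per-slot guarantee to the joint decoding event over all $\nSlots$ slots required by the disjunction-code definition; since the noise is independent across the $\nSlots$ disjoint blocks, a union bound suffices, and one must keep track of where (if anywhere) a factor depending on $\nSlots$ is to be absorbed.
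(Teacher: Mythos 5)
Your construction and analysis coincide with the paper's proof: per-slot repetition of $\sqrt{\channelPowerLim}$ or $0$, threshold decoding of the block average at $\tfrac12\sqrt{\channelPowerLim}$, reduction of a slot error to the event $\absolute{\channelrep^{-1}\sum_{\channelInstant}\channelNoise_{\channelInstant}}\geq\tfrac12\sqrt{\channelPowerLim}$, and a Hoeffding-type tail bound for sums of independent centered sub-gaussian variables (the paper cites Vershynin's Proposition 5.10), with your constants $2$ and $\log 2<1$ playing the role of the paper's $\eulerNum$ and $1$ in~(\ref{channel-error-probability}).

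Regarding the loose end you flag at the close: the paper's proof takes \emph{no} union bound over slots at all --- it bounds only the per-slot error by $\channelErrorprob$ and stops, i.e.\ it implicitly reads the theorem's guarantee per slot. Under a strict reading of Definition~\ref{dfn:disjunctioncode}, in which the error event is that the decoded $\nSlots$-vector differs from the true disjunction vector in \emph{any} coordinate, both your argument and the paper's yield only a joint error of $\nSlots\channelErrorprob$; making the joint error at most $\channelErrorprob$ would require replacing $\channelErrorprob$ by $\channelErrorprob/\nSlots$ in~(\ref{channel-error-probability}), i.e.\ an additive $\log\nSlots$ term. The paper absorbs exactly this factor downstream: its final corollary invokes the theorem with $\channelErrorprob=\errorprob/\nSlots$ and performs the union bound there, which is why the term $2+\log(\nActive+1)+\log\log\frac{\nInactive}{\errorprob}+\log\frac{1}{\errorprob}$ (precisely $\log\frac{\nSlots}{\errorprob}+1$) appears in~(\ref{eqn:channel-corollary}). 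So your instinct that ``a factor depending on $\nSlots$'' must land somewhere is correct, and you are being more careful than the paper on this point; to finish, either adopt the per-slot semantics the paper effectively uses, or strengthen the hypothesis to $\channelrep\geq\frac{\subgaussnormMax^2}{\channelPowerLim}\cdot\frac{\log\frac{\nSlots}{\channelErrorprob}+1}{\hoeffdingTypeConst}$.
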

\begin{proof}
For each $\indexChannelEndpoints$, we define a repetition encoder 
\[\channelEncoder{\indexChannelEndpoints}: \channelMessage{\indexChannelEndpoints}=(\channelMessage{\indexChannelEndpoints,1}, \dots, \channelMessage{\indexChannelEndpoints,\nSlots}) \mapsto (\channelIn{\indexChannelEndpoints,\channelInstant})_{\channelInstant \in \{1,\dots,\nSlots\channelrep\}}\]
by insisting that for each slot $\indexSlots \in \{1,\dots,\nSlots\}$
\[
\channelIn{\indexChannelEndpoints,(\indexSlots-1)\channelrep+1} = \dots = \channelIn{\indexChannelEndpoints,\indexSlots\channelrep} =
\begin{cases}
 \sqrt{\channelPowerLim}, &\channelMessage{\indexChannelEndpoints,\indexSlots} = \text{\emph{true}} \\
                       0, &\channelMessage{\indexChannelEndpoints,\indexSlots} = \text{\emph{false}.}
\end{cases}
\]
As an intermediate step, the decoder computes for each $\indexSlots \in \{1,\dots,\nSlots\}$ a value $\channelOutIntermediate_\indexSlots$ as
\[ \channelOutIntermediate_\indexSlots := \frac{1}{\channelrep}\sum_{\channelInstant=(\indexSlots-1)\channelrep+1}^{\indexSlots\channelrep} \channelOut_{\channelInstant}.
\]
We denote the total averaged channel noise in slot $\indexSlots$ by
\[ \tilde{\channelNoise}_\indexSlots = \sum_{\channelInstant=(\indexSlots-1)\channelrep+1}^{\indexSlots\channelrep} \frac{\channelNoise_{\channelInstant}}{\channelrep}
 \]
and note that if the messages $\channelMessage{\indexChannelEndpoints,\indexSlots}$ are \emph{false} for all $\indexChannelEndpoints$ in slot $\indexSlots$, then $\channelOutIntermediate_\indexSlots = \tilde{\channelNoise}_\indexSlots$, whereas if for at least one $\indexChannelEndpoints$ in slot $\indexSlots$ the message $\channelMessage{\indexChannelEndpoints,\indexSlots}$ is \emph{true}, then $\channelOutIntermediate_\indexSlots \geq \sqrt{\channelPowerLim} + \tilde{\channelNoise}_\indexSlots$.
 
So define the decoder $\channelDecoder: (\channelOut_1,\dots,\channelOut_{\nSlots \channelrep}) \mapsto (\channelMessageDecoded_1,\dots,\channelMessageDecoded_\nSlots)$ by
\[
\channelMessageDecoded_\indexSlots :=
\begin{cases}
\text{\emph{true}},  &\channelOutIntermediate_\indexSlots > \frac{1}{2}\sqrt{\channelPowerLim} \\
\text{\emph{false}}, &\text{otherwise.}
\end{cases}
\]
Clearly, $\channelMessageDecoded_\indexSlots$ will be correctly decoded as long as $\absolute{\tilde{\channelNoise}_\indexSlots} < \sqrt{\channelPowerLim}/2$, and the Hoeffding-type inequality for sequences of independent centered sub-gaussian random variables~\cite[Proposition 5.10]{Vershynin_10} tells us that (independently of $\indexSlots$)
\begin{equation}
\label{eqn:hoeffding-type-application}
\Probability\left(\absolute{\tilde{\channelNoise}_\indexSlots} \geq \frac{1}{2}\sqrt{\channelPowerLim}\right) \leq
\eulerNum^{1-\frac{\hoeffdingTypeConst' \channelPowerLim \channelrep}{4\subgaussnormMax^2}},
\end{equation}
where $\hoeffdingTypeConst'>0$ is an absolute constant. By insisting that the right hand side in~(\ref{eqn:hoeffding-type-application}) be less than or equal $\channelErrorprob$, setting $\hoeffdingTypeConst:=\hoeffdingTypeConst'/4$ and solving for $\channelrep$, we obtain~(\ref{channel-error-probability}).
\end{proof}
\begin{cor}
For every $\subgaussnormMax > 0, \channelPowerLim > 0, \errorprob > 0$ and
\begin{align}
\begin{split}
\label{eqn:channel-corollary}
\channelBlockLength &\geq
\frac{\subgaussnormMax^2}{\channelPowerLim} \cdot \frac{1}{\hoeffdingTypeConst} \eulerNum (\nActive+1)  \left( \log\nInactive + \log\frac{1}{\errorprob} \right) \cdot \\ &\phantom{=} \left( 2+ \log(\nActive+1) + \log \log \frac{\nInactive}{\errorprob} + \log\frac{1}{\errorprob} \right),
\end{split}
\end{align}
where $\hoeffdingTypeConst>0$ is an absolute constant, Problem~\ref{prob:general} can be solved over a channel with arbitrarily varying sub-gaussian noise of norm at most $\subgaussnormMax$ and power constraint $\channelPowerLim$ with error bound $2\errorprob$.
\end{cor}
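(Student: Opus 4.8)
The plan is to compose the two layers of the construction---the detection scheme of Section~\ref{scheme} and the repetition disjunction code of Theorem~\ref{thm:channel-code}---and to control the two resulting sources of error by a union bound, allotting $\errorprob$ of the total budget $2\errorprob$ to each.

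First I would fix the number of slots. Choosing $\nSlots := \lceil \eulerNum(\nActive+1)(\log\nInactive + \log\frac{1}{\errorprob})\rceil$, Corollary~\ref{exact-set-corollary} guarantees that, \emph{provided every one of the $\nSlots$ disjunctions is decoded correctly}, the scheme returns $\setPotActive{\nSlots}$ equal to the exact set of active nodes with probability at least $1-\errorprob$. Denote by $E_2$ the complementary (idealized) failure event $\{\potActive{\nSlots} > 0\}$, so that $\Probability(E_2) \leq \errorprob$; note that $E_2$ depends only on the common choosing randomness. Since a node's transmitted message in a slot depends only on whether it is active and chosen, and not on earlier decoding outcomes, the remark in Section~\ref{scheme} lets me treat all $\nSlots$ disjunctions as one block, so Theorem~\ref{thm:channel-code} applies directly.

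Second I would instantiate the disjunction code. The per-slot estimate~(\ref{eqn:hoeffding-type-application}) in the proof of Theorem~\ref{thm:channel-code} bounds the probability of mis-decoding any single fixed slot, uniformly over all message inputs and independently of the choosing outcome. To make the whole block reliable I would require this per-slot error to be at most $\errorprob/\nSlots$ and take a union bound over the $\nSlots$ slots; then the event $E_1$ that \emph{some} disjunction is mis-computed satisfies $\Probability(E_1 \mid \text{choosing}) \leq \nSlots \cdot (\errorprob/\nSlots) = \errorprob$ for every choosing outcome. Concretely, this amounts to invoking~(\ref{channel-error-probability}) with $\channelErrorprob := \errorprob/\nSlots$, which fixes the number of repetitions $\channelrep$.

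Finally, since on $E_1^c$ the actual run of the scheme is identical to its idealized run, the true failure event is contained in $E_1 \cup E_2$, and a union bound gives overall error at most $\errorprob + \errorprob = 2\errorprob$. The total number of channel uses is $\channelBlockLength = \channelrep\nSlots$; substituting the chosen $\nSlots$ and expanding $\log\frac{1}{\channelErrorprob} + 1 = \log\frac{\nSlots}{\errorprob} + 1 = 2 + \log(\nActive+1) + \log\log\frac{\nInactive}{\errorprob} + \log\frac{1}{\errorprob}$ (using $\log\nSlots = 1 + \log(\nActive+1) + \log\log\frac{\nInactive}{\errorprob}$) then yields~(\ref{eqn:channel-corollary}). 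I expect the only delicate point to be this error bookkeeping: because $\channelErrorprob$ enters~(\ref{channel-error-probability}) only through $\log\frac{1}{\channelErrorprob}$, the factor $1/\nSlots$ forced by the union bound costs merely an additive $\log\nSlots$ term rather than a multiplicative penalty, and keeping this aligned with the constants of~(\ref{channel-error-probability}) while substituting is the part that needs care; the two union bounds themselves are routine.
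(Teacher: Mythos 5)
Your proposal is correct and follows essentially the same route as the paper's own proof: fix $\nSlots$ via Corollary~\ref{exact-set-corollary}, instantiate the repetition code of Theorem~\ref{thm:channel-code} with $\channelErrorprob = \errorprob/\nSlots$, combine the two failure sources by a union bound to get $2\errorprob$, and substitute $\nSlots$ into $\channelrep\nSlots$ to obtain~(\ref{eqn:channel-corollary}). Your additional bookkeeping (the events $E_1$, $E_2$, the conditioning on the common randomness, and the explicit expansion of $\log\frac{\nSlots}{\errorprob}+1$) only makes explicit what the paper leaves implicit.
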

\begin{proof}
We first choose $\nSlots$ to satisfy~(\ref{exact-set-corollary-hypothesis}), and thus, by Corollary~\ref{exact-set-corollary}, if we conduct the scheme with an error free channel code, we can bound the error probability by $\errorprob$. Next, we choose $\channelrep$ according to~(\ref{channel-error-probability}) with $\channelErrorprob=\errorprob/\nSlots$ and use the channel code given by Theorem~\ref{thm:channel-code} to convey our disjunctions. Then, by the union bound, the overall error probability (i.e. the probability of the event that one of our $\nSlots$ disjunctions is decoded incorrectly or the application of the scheme does not result in exact recovery of the set of active nodes) does not exceed $\errorprob + \nSlots \cdot \errorprob / \nSlots = 2 \errorprob$. We can do this as long as we are allowed an overall number of channel uses
\[\channelBlockLength \geq \nSlots \channelrep =
\frac{\subgaussnormMax^2}{\channelPowerLim} \cdot \frac{1}{\hoeffdingTypeConst} \nSlots \left(\log\frac{\nSlots}{\errorprob} + 1\right),\]
and substituting~(\ref{exact-set-corollary-hypothesis}) we get requirement~(\ref{eqn:channel-corollary}).
\end{proof}
\section*{Acknowledgement}
We would like to thank Peter Jung and Richard Küng for a pleasant discussion about their work~\cite{Kueng_Jung_16,Kueng_Jung_16_long} as well as Steffen Limmer for a helpful discussion on possible connections of our work to discrete valued compressive sensing. This work was supported by the German Ministry of Research and Education (BMBF) under grant 16KIS0605 and by the German Research Foundation (DFG) under grant STA 864/8-1.
\bibliographystyle{plain}
\bibliography{references}

\begin{thebibliography}{10}

\bibitem{Linux44urand}
{\em Linux Programmer's Manual: random, urandom - kernel random number source
  devices}.

\bibitem{Python35PseudoRand}
{\em The Python Standard Library: random — Generate pseudo-random numbers}.
\newblock https://docs.python.org/3.5/library/random.html.

\bibitem{Du_Hwang_99}
Ding-Zhu Du and F.~K. Hwang.
\newblock {\em Combinatorial Group Testing and Its Applications}, volume~12 of
  {\em Series on Applied Mathematics}.
\newblock World Scientific, 2nd edition, Dec 1999.

\bibitem{Erdos_Renyi_63}
P.A. Erdős and A.~Rényi.
\newblock On two problems of information theory.
\newblock {\em Magyar Tud. Akad. Mat. Kutató Int. Kőzl.}, 8:229--243, 1963.

\bibitem{Hwang_Liu_04}
F.~K. Hwang and Y.~C. Liu.
\newblock A general approach to compute the probabilities of unresolved clones
  in random pooling designs.
\newblock {\em Probability in the Engineering and Informational Sciences},
  18(2):161–183, Apr 2004.

\bibitem{Kainkaryam_Woolf_09}
R.~M. Kainkaryam and P.~J. Woolf.
\newblock Pooling in high-throughput drug screening.
\newblock {\em Curr Opin Drug Discov Devel.}, 12(3):339--350, May 2009.

\bibitem{Korner_Marton_79}
J.~Körner and K.~Marton.
\newblock How to encode the modulo-two sum of binary sources (corresp.).
\newblock {\em IEEE Transactions on Information Theory}, 25(2):219--221, Mar
  1979.

\bibitem{Kueng_Jung_16}
R.~Küng and P.~Jung.
\newblock Robust nonnegative sparse recovery and 0/1-bernoulli measurements.
\newblock In {\em 2016 IEEE Information Theory Workshop (ITW)}, pages 260--264,
  Sept 2016.

\bibitem{Kueng_Jung_16_long}
R.~Küng and P.~Jung.
\newblock Robust nonnegative sparse recovery and the nullspace property of 0/1
  measurements.
\newblock {\em arXiv:1603.07997}, 2016.

\bibitem{Nazer_Gastpar_07}
B.~Nazer and M.~Gastpar.
\newblock Computation over multiple-access channels.
\newblock {\em IEEE Transactions on Information Theory}, 53(10):3498--3516, Oct
  2007.

\bibitem{Nazer_Gastpar_11}
B.~Nazer and M.~Gastpar.
\newblock Reliable physical layer network coding.
\newblock {\em Proceedings of the IEEE}, 99(3):438--460, Mar 2011.

\bibitem{Vershynin_10}
Roman Vershynin.
\newblock Introduction to the non-asymptotic analysis of random matrices.
\newblock {\em arXiv:1011.3027}, 2010.

\bibitem{Zhang_Huang_17}
Wenyi Zhang and Lingyan Huang.
\newblock On or many-access channels.
\newblock {\em arXiv:1701.03620}, 2017.

\end{thebibliography}
\end{document}